\newcommand{\bm}[1]{\mbox{$\boldmath{#1}$}}
\newcommand{\mc}[1]{\mathcal{#1}}
\newcommand{\field}[1]{{\mathbb{#1}}}
\newcommand{\bea}{\begin{eqnarray}}
\newcommand{\eea}{\end{eqnarray}}
\newcommand{\beas}{\begin{eqnarray*}}
\newcommand{\eeas}{\end{eqnarray*}}
\newcommand{\leftm}{\left[\begin{array}}
\newcommand{\rightm}{\end{array}\right]}
\newcommand{\reals}{\mbox{$\mathbb R$}}
\newcommand{\complex}{\mbox{$\mathbb C$}}
\newcommand{\ones}{\textbf{1}}
\newcommand{\zeros}{\textbf{0}}
\newcommand{\mA}{\mathcal{A}}
\newcommand{\mI}{\mathcal{I}}
\newcommand{\mC}{\mathcal{C}}
\newcommand{\mL}{\mathcal{L}}
\newcommand{\mV}{\mathcal{V}}
\newcommand{\mN}{\mathcal{N}}
\newcommand{\blue}[1]{{\color{blue}#1}}
\newcommand{\red}[1]{{\color{red}#1}}
\newcommand{\xa}{x_{\text{aug}}}
\DeclareMathOperator*{\argmax}{arg\,max}
\DeclareMathOperator*{\argmin}{arg\,min}
\DeclareMathOperator*{\argsup}{arg\,sup}
\DeclareMathOperator*{\softmax}{softmax}
\newcommand{\T}{\textrm{T}}
\newcommand{\dt}{\textrm{d}}
\newtheorem{thm}{Theorem}
\newtheorem{Problem}[thm]{Problem}
\newtheorem{remark}[thm]{Remark}
\def\BibTeX{{\rm B\kern-.05em{\sc i\kern-.025em b}\kern-.08em
    T\kern-.1667em\lower.7ex\hbox{E}\kern-.125emX}}
\begin{document}

\title{Temporal-Logic-Based Intermittent, Optimal, and Safe Continuous-Time Learning for Trajectory Tracking
\author{Aris~Kanellopoulos$^1$,
        ~Filippos~Fotiadis$^1$,
         ~Chuangchuang~Sun$^2$,
             ~Zhe~Xu$^3$,\\
             ~Kyriakos~G.~Vamvoudakis$^1$,
        ~Ufuk~Topcu$^4$,
~Warren~E.~Dixon$^5$
        }
\thanks{$^1$A.~Kanellopoulos, F.~Fotiadis, and K.~G.~Vamvoudakis are with the Daniel Guggenheim School of Aerospace Engineering, Georgia Institute of Technology, Atlanta, Georgia, USA, $30332$, USA, e-mail: \{ariskan, ffotiadis, kyriakos\}@gatech.edu.}
\thanks{$^2$Chuangchuang~Sun is with the Department of Aeronautics and Astronautics, Massachusetts Institute of Technology, Cambridge, MA $02139$,
USA, email: ccsun1@mit.edu.}

\thanks{$^3$Zhe~Xu is with the School for Engineering of Matter, Transport, and Energy, Arizona State University, Tempe, AZ $85287$, email: xzhe1@asu.edu.}
\thanks{$^4$Ufuk~Topcu are with the Department of Aerospace Engineering and Engineering Mechanics, and the Oden Institute for Computational Engineering
and Sciences, University of Texas, Austin,
TX $78712$, USA, email: utopcu@utexas.edu.}
\thanks{$^5$Warren~E.~Dixon is with the the Department of Mechanical and Aerospace Engineering, University of Florida, Gainesville, FL $32611-6250$, USA, e-mail: wdixon@ufl.edu}

\thanks{This work was supported in part, by ARO under grant No. W$911$NF-$19-1-0270$, by ONR Minerva under grant No. N$00014-18-1-2160$, and by NSF under grant Nos. CAREER CPS-$1851588$ and S\&AS $1849198$.}}

\maketitle

\begin{abstract}
In this paper, we develop safe reinforcement-learning-based controllers for systems tasked with accomplishing complex missions that can be expressed as linear temporal logic specifications, similar to those required by search-and-rescue missions.
We decompose the original mission into a sequence of tracking sub-problems under safety constraints. We impose the safety conditions by utilizing barrier functions to map the constrained optimal tracking problem in the physical space to an unconstrained one in the transformed space.
Furthermore, we develop policies that intermittently update the control signal to solve the tracking sub-problems with reduced burden in the communication and computation resources. Subsequently, an actor-critic algorithm is utilized to solve the underlying Hamilton-Jacobi-Bellman equations. Finally, we support our proposed framework with stability proofs and showcase its efficacy via simulation results.
\end{abstract}

\begin{keywords}
Safe learning, temporal logic, optimal tracking.
\end{keywords}

\section{Introduction}

The problem of assured autonomy is challenging when the dynamics are complex and time-varying, the models are
largely unknown, the goals are dynamic and complex, and the environment is unknown and possibly adversarial. 
For learning-based systems the question of how to design, implement, and maintain high-confidence,
high-performance, and dynamically-configured secure policies for avoiding unsafe operating regions is of
paramount importance. 
Recognizing that reinforcement learning (RL) \cite{sutton2018reinforcement,kiumarsi2017optimal,vrabie2013optimal,kamalapurkar2018reinforcement} is an important component of assured autonomy, we
focus on learning-enabled systems.

In the context of control systems, a large class of objectives can be modeled as a requirement to follow a desired trajectory, rather than regulate the state of the system. Furthermore, certain scenarios involve a number of different reference trajectories and endpoints through which the system needs to advance. One scenario indicative of this procedure is found in search and rescue missions in which an autonomous vehicle follows specific search trajectories, with possible intermediate stops for recharging, before eventually returning to a specific location after certain conditions have been met \cite{lin2009uav,li2018planning}.

Temporal logic specifications \cite{emerson1990temporal} offer a systematic way of describing the different modes of operation of such system as well as the ways that those modes are interconnected through time. Similarly, to facilitate the use of those methods in high-risk environments, energy expenditure should be minimized.

The complexity of the required tasks that autonomous systems will have to accomplish, especially when time critical missions are considered,
leads to the need for the designer to develop control strategies that alleviate the burden on the communication and computation resources of the system. This, in turn, led to the introduction of event-triggered control \cite{heemels2012introduction}. Event-triggering mechanisms create another layer of security and safety in the autonomous system, since they minimize the opportunities for erroneous signals to affect the system; whether they are caused by environmental disturbances or malicious attacks.

\subsection*{Related Work}
The problem of safe learning has been in the forefront in recent years. The authors in \cite{ames2016control} combine control barrier functions with Lyapunov control functions via quadratic programs and introduce new classes of barrier functions to construct safe controllers. 
The framework of control barrier functions is applied to the problem of autonomous navigation of traffic circles in \cite{konda2019provably}.
In \cite{kimmel2015active}, the authors deal with the problem of human-robot interaction, specifically confronting safety issues via a dynamic invariance control framework.
The authors of \cite{fisac2018general}, investigate Hamilton-Jacobi-based reachability methods, thus constructing a switching controller that guarantees the system will remain within a predefined region.
The authors of \cite{greene2020sparse} develop an approximate online adaptive solution to an optimal control problem under safety constraints with the use of barrier functions and sparse learning.
The use of temporal logic specifications for safety has been studied in \cite{sun2020continuous}, where regulation problems have been solved via RL techniques while temporal logic specifications are guaranteed to be satisfied.
In \cite{muniraj2018enforcing}, deep Q-learning was leveraged to guarantee given specifications in a Markov decision process framework.
The motion-planning problem in a multi-robot system under temporal logic specifications is investigated in \cite{saha2014automated} where the authors use a library of motion primitives to accomplish the given mission.
Finally, the authors of \cite{sadigh2016safe} consider safety guarantees for control applications via probabilistic signal temporal logic.

The framework of event-triggered control has been extensively investigated in the literature, e.g., in \cite{heemels2012periodic,girard2014dynamic,heemels2012introduction,cheng2017event}. While the authors in \cite{donkers2010output} expanded the framework to take into account output feedback, most implementations remained static. To further enhance the capabilities of autonomous systems, event-triggered control has been used in tandem with RL techniques in various scenarios. In \cite{vamvoudakis2014event}, we have brought together intermittent mechanisms to alleviate the burden of an actor-critic framework. This was later extended for systems with unknown dynamics under a Q-learning framework in \cite{vamvoudakis2016event}.
The authors of \cite{xu2019controller} developed a controller with intermittent communication for a multi-agent system, whose dwell-time conditions, needed for stability, as well as safety constraints were expressed by metric temporal logic specifications.
Finally, event-triggering mechanisms were employed to the problem of autonomous path planning \cite{kontoudis2020online}. Compared to the aforementioned works, and to the best of our knowledge, this is the first time that an optimal, safe, and intermittent learning framework combined with formal methods is used in a continuous-time framework for tracking a family of trajectories.

\paragraph*{Contributions}
The contributions of this work are threefold. Firstly, we formulate a system tasked with accomplishing a mission consisting of regulation and tracking sub-problems. We decouple the problems through the use of a finite state automaton (FSA), which also models safety constraints. Secondly, we apply a barrier function-based transformation on the system and the required trajectories, thus allowing us to map the original problem into a series of optimal tracking sub-problems. Finally, we employ an actor-critic framework to solve the underlying tracking problems in a model-free data-driven fashion.

\paragraph*{Structure}
The rest of the paper is structured as follows. In Section II, we formulate the problem, introduce the temporal logic methodology that facilitates the construction of the tracking sub-problems and we employ barrier functions to map the original problem to a transformed state space that guarantees safety. In Section III, the optimal tracking sub-problems are formulated in the transformed state space and event-triggered controllers are derived. In Section IV, we introduce an actor-critic framework that solves the sub-problems. In Section V, we present simulation results for a system tasked with accomplishing a mission consisting of sequential regulation and tracking and, finally, in Section VI, we conclude the paper and discuss potential future directions.

\section{Problem Statement}
\label{sec_problem}
Consider the time-invariant control-affine nonlinear system

\bea\label{eq:system}
&&\dot{x} = f(x) + g(x)u,\nonumber\\
&&x(0) = x_0,~ t\geq 0,
\eea
where $x \in\reals^n, \ u\in\reals^m$ are the states and the control input, respectively.

The system \eqref{eq:system} is desired to achieve certain goals, constrained by temporal logic specifications (which we will describe later), by following one of the trajectories given by the family of exosystems
\bea\label{eq:system-tracked}
\dot{z}_i = h_i(z_i), ~z_i(0) = z_{i,0}, ~\forall t \ge 0, ~i \in \mI,
\eea
where $z_i:\reals_+\rightarrow\reals^n$ is the $i$-th candidate of the desired trajectories to be tracked, $h_i:\reals^n \to \reals^n$ is a Lipschitz continuous function with $h_i(0) = 0$, and $\mI$ is the set of the trajectories to be tracked.

\subsection{Linear Temporal Logic Syntax and Semantics}
We consider \textit{syntactically co-safe linear temporal logic} (co-safe LTL) and \textit{syntactically safe linear temporal logic} (safe LTL) formulas \cite{Lahijanian2016, Xu2019} for the specifications. Let $\mathbb{B} = \{\textrm{True}, \textrm{False}\}$ be the Boolean domain. A time set $\mathbb{T}$ is $\mathbb{R}_{>0}$. A set $AP$ is a set of \textit{atomic predicates}, each of which is a mapping $\reals^n\times\mathbb{T}\rightarrow\mathbb{B}$.

The syntax of a co-safe LTL formulas can be recursively defined as follows
\begin{align}\nonumber
&\phi :=\top~ | ~p ~|~ \lnot p ~|~  \phi \wedge \phi ~|~ \phi \vee \phi~|~ \bigcirc \phi ~|~ \Diamond \phi ~|~ \phi \mathcal{U} \phi,
\end{align}
where $\top$ stands for the Boolean constant True; $p\in AP$ is an atomic
predicate; $\lnot$ (negation), $\wedge$ (conjunction), and $\vee$ (disjunction)
are standard Boolean connectives; $\bigcirc$ (next), $\Diamond$ (eventually), and $\mathcal{U}$ (until) are temporal operators. 

The syntax of a safe LTL formulas can be recursively defined as follows,
\begin{align}\nonumber
&\phi :=\top~ | ~p ~|~ \lnot p ~|~  \phi \wedge \phi ~|~ \phi \vee \phi~|~ \bigcirc \phi ~|~ \Box \phi,
\end{align}
where $\Box$ (always) is a temporal operator.

We refer the readers to Sec. II-B of \cite{Lahijanian2016} for the Boolean semantics of co-safe and safe LTL formulas. For a co-safe LTL formula $\phi$, one can construct a finite state automaton (FSA) that accepts precisely the proposition sequences (i.e., \textit{words}) that satisfy $\phi$. For a safe LTL formula $\phi$, one can construct an FSA that accepts precisely the proposition sequences (i.e., \textit{words}) that violate $\phi$.

For example, a co-safe LTL specification $\phi_{\textrm{c}}=\Diamond p_2\  \wedge \ (\lnot p_2 \mathcal{U} p_1)$ can express that ``an unmanned aerial vehicle (UAV) should track a certain trajectory $z_1$ (see \eqref{eq:system-tracked}) before tracking another trajectory $z_2$'', where $p_1=(\vert\vert x(t)-z_1(t)\vert\vert\le \epsilon)$, and $p_2=(\vert\vert x(t)-z_2(t)\vert\vert\le \epsilon)$, and $\epsilon\in\mathbb{R}_{>0}$ is a threshold for tracking error.



\begin{Problem}\label{Problem:safety}
For the system given in \eqref{eq:system} and $\lambda >0$, find a control policy such that the closed-loop system has a stable equilibrium point, the control input satisfies $\|u\| \le \lambda$, and the trajectory of state $x$ satisfies an LTL specification $\phi=\phi_{\textrm{c}}\wedge\phi_{\textrm{s}}$, where $\phi_{\textrm{c}}$ is a co-safe LTL formula and $\phi_{\textrm{s}}$ is a safe LTL formula. \frqed
\end{Problem}

In this paper, for simplicity we consider the safe LTL formula $\phi_{\textrm{s}}$ to be in the form of $\phi_{\textrm{s}}=\Box p$, where $p$ is an atomic predicate.

\subsection{Decomposition of LTL Specifications}\label{sec:dec}
Given that $\phi=\phi_{\textrm{c}}\wedge\phi_{\textrm{s}}$ and $\phi_{\textrm{s}}=\Box p$, we construct an FSA that accepts precisely the proposition sequences that satisfy $\phi_{\textrm{c}}$ and manually divide Problem \ref{Problem:safety} into a series of sub-problems based on the states of the constructed FSA. For example, consider the LTL specification $\phi=\phi_{\textrm{c}}\wedge\phi_{\textrm{s}}$, where $\phi_{\textrm{c}}=\Diamond p_2\  \wedge \ (\lnot p_2 \mathcal{U} p_1)$ and $\phi_{\textrm{s}}=\Box p_3$, where $p_1$, $p_2$ and $p_3$ are different atomic predicates. Based on $\phi_{\textrm{c}}$ one can construct an FSA as shown in Figure~\ref{f:automaton}. Let $\mathcal{O}(p)$ denote the set of time-dependent states that satisfy an atomic predicate $p$. We assume that $\mathcal{O}(p_1)\cap\mathcal{O}(p_2)=\emptyset$ if $p_1$ and $p_2$ are different atomic predicates. Then, the only path from the initial FSA state $q_0$ to the final state $q_f$ is $q_0 \to q_1 \to q_f$, where the accepting state $q_f$ indicates that $\phi_{\textrm{c}}$ is satisfied. 
There are two resulting two-point boundary-value problem (TPBVP) sub-problems, when the state of the FSA transitions from $q_0$ to $q_1$, and transitions from $q_1$ to $q_f$, respectively. Specifically, when the state of the FSA transitions from $q_0$ to $q_1$, the two boundary conditions for the first TPBVP sub-problem are the initial state $x_0$ and $\mathcal{O}(p_1)$, respectively. In this first TPBVP sub-problem, the safety constraint can be encoded as $x\in\mathcal{O}(\neg p_2)\cap\mathcal{O}(p_3)$, as the state can not reach $\mathcal{O}(p_2)$ until it reaches $\mathcal{O}(p_1)$ according to $\phi_{\textrm{c}}$, and the state must always be in $\mathcal{O}(p_3)$ according to $\phi_{\textrm{s}}$. Similarly, when the state of the FSA transitions from $q_1$ to $q_f$, the two boundary conditions for the second TPBVP sub-problem are $\mathcal{O}(p_1)$ and $\mathcal{O}(p_2)$, respectively. In this second TPBVP sub-problem, the safety constraint can be encoded as $x\in\mathcal{O}(p_3)$ according to $\phi_{\textrm{s}}$. After $q_f$ is reached in the FSA, the state only needs to stay in $\mathcal{O}(p_3)$ until the end of time (according to $\phi_{\textrm{s}}$).

Note that generally for a complex FSA, it may not be straightforward to select a unique path from an initial FSA state to a final FSA state. Hence, we can view the FSA as a directed graph, (approximately) estimate the edge weights (i.e., distance), and then select a path by solving a shortest path problem. So far, we have finished decomposing the LTL specification into a sequence of sub-problems, that as a whole will eventually satisfy the original LTL specification.


\begin{figure}[!ht]
	\vspace{-0.1cm}
	\hspace{0.5cm}
	\includegraphics[scale=0.6]{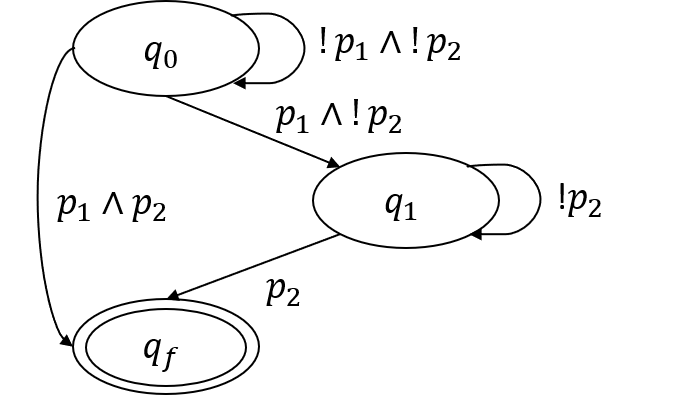}
	\vspace{-0cm}
	\caption{Finite state automaton generated by a co-safe LTL formula $\phi_{\textrm{c}}=\Diamond p_2\  \wedge \ (\lnot p_2 \mathcal{U} p_1)$.}
	\label{f:automaton}
\end{figure}

\subsection{Sub-Problem Tracking a Certain Trajectory}
In this paper, given that $\phi=\phi_{\textrm{c}}\wedge\phi_{\textrm{s}}$, we consider the predicates in the co-safe LTL formula $\phi_{\textrm{c}}$ to be in the form of $p=(\vert\vert x(t)-z_i(t)\vert\vert\le \epsilon), ~i\in \mI$. In this way, a certain trajectory $z_i, ~i\in \mI$ has to be tracked in each sub-problem decomposed from Problem \ref{Problem:safety}. Moreover, as described in Sec. \ref{sec_problem}-B, the system state $x$ is subject to some safety constraints in each sub-problem, as $x\in \mathcal{Q}$ needs to hold in each sub-problem, where $\mathcal{Q}= \{x\in\reals^n|c \le Ax + r \le C\}$, $A = [a_1~a_2~\ldots~a_m]^\T\in\reals^{m\times n}$,
with $a_i \in\reals^{n},~\forall i\in\lbrace 1,\dots,m\rbrace$,
$r = [r_1,\ldots,r_m]^{\textrm{T}}\in\reals^m$,
$c = [c_1,\ldots,c_m]^{\textrm{T}}\in\reals^m$ and $C = [C_1,\ldots,C_m]^{\textrm{T}} \in\reals^m$. The TPBVP sub-problem can thus be summarized as follows.

\begin{Problem}\label{Problem:safety2}
(sub-problem) For the system \eqref{eq:system}, find a control input $u$ that is constrained to satisfy $\|u\| \le \lambda$, so that the system state $x$ tracks a certain trajectory $z_i, i\in \mathcal{I}$ (i.e., $(\vert\vert x(t)-z_i(t)\vert\vert\le \epsilon)$ for a given threshold tracking error $\epsilon$) and the state $x$ remains in the set $\mathcal{Q}$, given an initial condition $x(0)=x_0$. \frqed
\end{Problem}

\begin{remark}
As a sub-problem of Problem~\ref{Problem:safety}, Problem~\ref{Problem:safety2} translates a temporal logic constraint $\phi$ into a more detailed and time-invariant constraint in the current time interval.
\end{remark}

To guarantee the safety specifications, we need to transform the system \eqref{eq:system}, which is constrained by the safety specifications $\mathcal{Q}$, into an unconstrained one. To this end, we design the following barrier function:
\bea\label{eq:barrier}
b(q,{c}_0,{C}_0) = \log\Big(\frac{C_0}{c_0}\frac{c_0-q}{C_0-q}\Big),\forall p\in(c_0,C_0),
\eea
where ${c}_0 < 0 < {C}_0$. The barrier $b(q,{c}_0, {C}_0)$ is invertible in the interval $({c}_0, {C}_0)$, and its inverse is given by
\bea\label{eq:barrier_inverse}
b^{-1}(y,{c}_0, {C}_0) = c_0C_0\frac{e^{\frac{y}{2}} - e^{-\frac{y}{2}}}{c_0e^{\frac{y}{2}} - C_0e^{-\frac{y}{2}}},\forall y\in \reals,
\eea
with dynamics
\bea\label{eq:barrier_inverse_derivatives}
\frac{\dt b^{-1}(y,{c}_0, {C}_0)}{\dt y} = \frac{C_0c_0^2 - c_0C_0^2}{c_0^2e^y - 2c_0C_0 + C_0^2e^{-y}}.
\eea
We now use \eqref{eq:barrier}-\eqref{eq:barrier_inverse_derivatives} to perform the transformation of \eqref{eq:system}. In particular, let us define
\bea\label{eq:transformation}
s_i &=& b(q_i(x),c_i,C_i) \nonumber\\
q_i(x) &=& b^{-1}(s_i,c_i,C_i) \\
q_i(x) &=& a_i^\textrm{T}x + r_i, \forall i=1,\ldots,m. \nonumber
\eea
Through the use of the chain rule, we obtain
\bea
\frac{\dt q_i(x)}{\dt t} = a_i^\textrm{T}\dot{x} = \frac{\dt b^{-1}(s_i,c_i,C_i)}{\dt s_i}\frac{\dt s_i}{\dt t},\nonumber
\eea
which yields
\bea\label{eq:transformation2}
\frac{\dt s_i}{\dt t} = \frac{1}{\frac{\dt b^{-1}(s_i,c_i,C_i)}{\dt s_i}}a_i^\textrm{T}\dot{x}.
\eea
Additionally, from \eqref{eq:transformation}, we have that
\bea\label{eq:transformation5}
Ax + r = b^{-1}(s,c,C),
\eea
where $b^{-1}(s,c,C) = [b^{-1}(s_1,c_1,C_1),\ldots,b^{-1}(s_m,c_m,C_m)]^\textrm{T} \in\reals^m$, hence we conclude that
\bea\label{eq:transformation6}
x  = (A^\T A)^{-1}A^\T(b^{-1}(s,c,C) - r).
\eea

\begin{remark}
To make $A^\T A\in\reals^{n}$ invertible, $A$ is assumed to be full column rank. This assumption is not restrictive, since we can make it hold by adding sufficiently large trivial bounds on the states, i.e., $-M_i \le x_i  \le M_i$, with $M_i\gg0$, $\forall i=1,\ldots,n$.
\frqed
\end{remark}

Combining \eqref{eq:transformation2} with the dynamics of $x$ yields the unconstrained subsystem dynamics, $\forall i=1,\ldots,n$,
\bea\label{eq:transformation7}
\frac{\dt s_i}{\dt t} &=& \frac{1}{\frac{\dt b^{-1}(s_i,c_i,C_i)}{\dt s_i}}a_i^\textrm{T}\dot{x} \\
&=&\frac{1}{\frac{\dt b^{-1}(s_i,c_i,C_i)}{\dt s_i}}a_i^\textrm{T} \big(f(x) + g(x)u \big),\ t\geq 0.\nonumber
\eea
 In \eqref{eq:transformation7}, the constrained state $x$ can be written with respect to the unconstrained state $s=[s_1,\ldots,s_n]^\textrm{T}$ as in \eqref{eq:transformation6}. Therefore, the subsystems \eqref{eq:transformation7} can be described in the compact form
\bea\label{eq:transformation8}
\dot{s} = F(s) + G(s)u,\ t\geq 0.
\eea
where $F:\reals^n\rightarrow\reals^n$, $G:\reals^n\rightarrow\reals^{n\times m}$.

To achieve optimal tracking while satisfying the required safety constraints, we shall solve Problem $2$ in the transformed $s$-domain both for the system---employing the dynamics given by \eqref{eq:transformation8}---as well as the image of the target trajectory in the $s$-domain.
Thus, let the transformed dynamics of $z\in z_\mathcal{I}$ be given by

\begin{align}
&\dot{z}_s(t)= f_d(z_s(t)),\ z_s(0)=z_0,\ t\geq 0, \label{eq:exo1}
\end{align}
where $z_s(t)\in\reals^n$ denotes the bounded desired trajectory in the $s$-domain, and $f_d$ is a Lipschitz continuous function, with $f_d(0)=0$, which yields the dynamics of $z_s$. The function $f_d$ can be derived by following the same procedure that was used to transform the state $x$ into $s$.

We may now define a tracking error $e_s(t)=s(t)-z_s(t)\in\mathbb{R}^n$ in the $s$-domain, $\forall t\ge 0$, with associated dynamics given by
\begin{equation*}
    \dot{e}_s(t)=F(e_s(t)+z_s(t))-f_d(z_s(t)).
\end{equation*}
Hence, concatenating ${e}_s$ and $z_s$ into a single state vector $s_{aug}:=[e_s^\textrm{T}~ z_s^\textrm{T}]^\textrm{T}$, we derive the concatenated dynamics in the $s$-domain
\begin{align}\label{eq:error_s}
\dot{s}_\mrm{aug}&=F_\mrm{aug}(s_\mrm{aug})+G_\mrm{aug}(s_\mrm{aug})u(t),\ t\geq 0,
\end{align}
where $F_\mrm{aug}(s_\mrm{aug}):=\matt{F(e_s(t)+z_s(t))-f_d(z(t))\\f_d(z_s(t))}$ and $G_\mrm{aug}(s_\mrm{aug}):=\matt{G(e_s(t)+z_s(t))\\0}$.

\begin{remark}
To be able to transform the reference trajectory dynamics through the barrier function method, it must hold that $z(t)\in \mathcal{Q}$, for all $t\geq0$ and for all $z_i, i\in \mathcal{I}$. This assumption is quite weak since it only requires that the system tracks a trajectory that does not violate any safety constraints. \frqed
\end{remark}


\section{Optimal Tracking Sub-Problems}

To reduce the communication burden and conserve resources, the system operates under a sampled version of the transformed state
\begin{align*}
\hat{s}_\mrm{aug}(t) =  \left \{  \begin{tabular}{ll} $s_\mrm{aug}(r_j),$ & $\forall t\in(r_j,r_{j+1}]$ \\ 
			 $s_\mrm{aug}(t),$ & $t=r_{j}.$\end{tabular}  \right. 
\end{align*}
The sampling instances constitute a strictly increasing sequence $\{r_j\}_{j=0}^\infty$, where $r_j$, $j\in\mathbb{N}$, is the $j$-th consecutive sampling instant, with $r_0=0$ and $\lim\limits_{j\rightarrow \infty} r_j=\infty$.
To decide when to trigger an event, we define the triggering error as the difference between the state $s_\mrm{aug}(t)$ at the current time $t$ and the state $\hat{s}_\mrm{aug}(t)$ that was sampled most recently:
\begin{align*}
e_\mrm{trig}(t)=\hat{s}_\mrm{aug}(t)-s_\mrm{aug}(t),\ \forall t\in(r_{j},r_{j+1}],\ j\in\mathbb{N}.
\end{align*}

Our objective is to find a feedback controller that minimizes a cost functional given by
\begin{align}\nonumber
J(s_\mrm{aug}(0);u)=\frac{1}{2}\int_0^\infty e^{-\gamma \tau}\big ( s_\mrm{aug}^\textrm{T} Q_\mrm{aug} s_\mrm{aug}+R(u)   \big)\textrm{d}\tau,
\end{align}
where $\gamma\in\reals^+$ is a discount factor, and $Q_\mrm{aug}=\matt{Q& 0_{n\times n}\\ 0_{n\times n} & 0_{n\times n}}$ is a user defined matrix where $Q \succeq 0$ and $0_{n\times n}$ a square matrix of zeros.
Furthermore, to satisfy the magnitude constraint on $u$, i.e.,  $\|u\| \le \lambda$, we chose $R(u)$ to have following form, adopted from \cite{abu2005nearly},
\bea\label{eq:Thetau}
R(u) = \int_{0}^{u}\lambda \text{tanh}^{-1}\Big(\frac{v}{\lambda}\Big)\gamma_1 \textrm{d} v,~\forall u\in(-\lambda,\lambda),
\eea
where $\text{tanh}^{-1}(\cdot)$ denotes the inverse of the hyperbolic tangent function and $\gamma_1>0$. The states $s$ and control $u$ are not coupled in the safety constraints.

Initially, we consider an infinite bandwidth optimal control problem, assuming that the controller has access to the augmented transformed state at all times.
Following standard optimal control methods, we define
the optimal value function $V:\reals^{2n} \rightarrow \reals$ given, $\forall s_{\mrm{aug}}$, as
\begin{align}\label{eq:opvalue}
V(&s_\mrm{aug}(t))=\min_{u} \frac{1}{2}\int_t^\infty  e^{-\gamma(\tau-t)}\big (s_\mrm{aug}^\textrm{T} Q_\mrm{aug} s_\mrm{aug}+R(u)\big)\textrm{d}\tau,
\end{align}
and the Hamiltonian associated with \eqref{eq:error_s} and \eqref{eq:opvalue} for the continuously updating controller as follows
\begin{align}\label{eq:hamil}
{H}(s_\mrm{aug},u_c,\frac{\partial V}{\partial s_\mrm{aug}})=\frac{\partial V}{\partial s_\mrm{aug}}^\textrm{T}(F_\mrm{aug}(s_\mrm{aug})+G_\mrm{aug}(s_\mrm{aug})u_c)+ \nonumber \\+\frac{1}{2}\big(s_\mrm{aug}^\textrm{T} Q_\mrm{aug} s_\mrm{aug}+ R(u_c) -2\gamma V(s_\mrm{aug})\big),\  \forall s_\mrm{aug},u_c.
\end{align}
After employing the stationarity condition, for the Hamiltonian \eqref{eq:hamil}, i.e., $\frac{\partial {H}(\cdot)}{\partial u_c}=0$, the infinite bandwidth optimal control can be found to be
\begin{align}\label{eq:opcontrol}
u_c^\star(s_\mrm{aug})&=\arg\min_{u_c} {H}(s_\mrm{aug},u_c,\frac{\partial V}{\partial s_\mrm{aug}}) \nonumber \\
&=- \lambda \textrm{tanh}\big(\frac{1}{2\gamma_1 \lambda}G_\mrm{aug}^\textrm{T}(s_\mrm{aug})\frac{\partial V}{\partial s_\mrm{aug}}\big),\ \forall s_\mrm{aug}.
\end{align}
By substituting the optimal control \eqref{eq:opcontrol} into \eqref{eq:hamil} one has the Hamilton-Jacobi-Bellman (HJB) equation given as
\begin{align}\label{eq:hjb}
{H}(s_\mrm{aug},u_c^\star,\frac{\partial V}{\partial s_\mrm{aug}})&=0, \ \forall s_\mrm{aug}.
\end{align}

Now, to reduce the communication between  the plant and the controller we use an intermittent version of \eqref{eq:hjb} by introducing a sampled-data component that will enforce sparse and aperiodic updates for the controller. This will be designed in such a way that certain conditions on the state of the system are met, and stability is guaranteed as is shown in the analysis that follows. Thus, the controller operates with the sampled version of the system states, rather than the actual ones, and \eqref{eq:opcontrol} becomes
\begin{align}\label{eq:evopcontrol} 
u^\star(\hat{s}_\mrm{aug})=- \lambda \textrm{tanh}\big(\frac{1}{2\gamma_1 \lambda}G_\mrm{aug}^\textrm{T}(\hat{s}_\mrm{aug})\frac{\partial V}{\partial \hat{s}_\mrm{aug}}\big),\ \forall \hat{s}_\mrm{aug}.
\end{align}

\begin{assumption}\label{as:lic}
There exists a positive constant $L$ such that
\begin{align*}
&\left\|u_c(s_\mrm{aug})-u(\hat{s}_\mrm{aug})\right\| \leq L \left\|e_\mrm{trig}\right\|, ~\forall s_\mrm{aug}, \hat{s}_\mrm{aug}.
\end{align*}
\frqed
\end{assumption}

\begin{remark}
This assumption on the Lipschitz constant of the controller is standard in the literature, e.g., in \cite{girard2014dynamic, borgers2014event, mazo2011decentralized, tabuada2007event}, and is satisfied in various realistic control applications. It also enables us to show convergence of the intermittent tracking policy that evolves in the transformed $s$-space. 
\frqed
\end{remark}
\begin{theorem}
Consider the constrained system evolving in the transformed $s$-space \eqref{eq:transformation8}, following the trajectory given by \eqref{eq:exo1}. Let the augmented tracking error system be given by \eqref{eq:error_s}, and the intermittent policy by \eqref{eq:evopcontrol}. 
Then, the closed-loop error system has an asymptotically stable equilibrium when $\gamma=0$, and is ultimately uniformly bounded (UUB) when $\gamma\neq0$, under the triggering condition given by
\begin{align*}
 \left\|e_\mrm{trig}\right\|^2\leq\frac{(1/2-\beta^2)\underline{\lambda}\big(Q\big)}{L^2 \lambda \gamma_1} \left\|e_s\right\|^2+\frac{1}{L^2 \lambda \gamma_1}R(u),
 \end{align*}
 where $\beta\in(0,\frac{1}{\sqrt{2}})$ is a design parameter and $\underline{\lambda}\big(Q\big)$ is the minimum eigenvalue of $Q$.
 Furthermore, Zeno behavior is excluded via a lower bound on the inter-event times, i.e., $\exists \bar{r}>0$ such that, $r_{j+1}-r_j>\bar{r},~\forall j\in\mathcal{N}$.
\end{theorem}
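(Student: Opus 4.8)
The plan is to use the optimal value function $V$ from \eqref{eq:opvalue}, which solves the HJB equation \eqref{eq:hjb}, as a Lyapunov candidate for the augmented error system \eqref{eq:error_s}. Since $Q\succeq 0$ penalizes $e_s$ and $R(u)$ is positive definite, $V$ is nonnegative in $s_\mrm{aug}$ and, under the detectability premise implicit in the problem, positive definite and decrescent in $e_s$, so it qualifies as a Lyapunov function. As a sanity check, evaluating $\dot V$ along the continuously-updating policy $u_c^\star$ and invoking \eqref{eq:hjb} with $\gamma=0$ yields $\dot V=-\tfrac12 e_s^\T Q e_s-\tfrac12 R(u_c^\star)\le 0$, so the infinite-bandwidth loop is already stable; the work is to show the sampled policy \eqref{eq:evopcontrol} preserves this up to the triggering budget.

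First I would differentiate $V$ along the intermittent closed loop, i.e.\ with $u=u^\star(\hat s_\mrm{aug})$. Writing $\dot V=\frac{\partial V}{\partial s_\mrm{aug}}^\T\big(F_\mrm{aug}+G_\mrm{aug}u^\star(\hat s_\mrm{aug})\big)$ and adding and subtracting the continuous optimal control $u_c^\star(s_\mrm{aug})$, the HJB identity collapses the ``continuous part'' and leaves $\dot V=-\tfrac12 e_s^\T Q e_s-\tfrac12 R(u_c^\star)+\gamma V+\frac{\partial V}{\partial s_\mrm{aug}}^\T G_\mrm{aug}\big(u^\star(\hat s_\mrm{aug})-u_c^\star(s_\mrm{aug})\big)$, using $s_\mrm{aug}^\T Q_\mrm{aug}s_\mrm{aug}=e_s^\T Q e_s$. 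The crux is the sampling-induced cross term. I would apply Cauchy--Schwarz and Assumption~\ref{as:lic} to bound it by $\big\|G_\mrm{aug}^\T\frac{\partial V}{\partial s_\mrm{aug}}\big\|\,L\,\|e_\mrm{trig}\|$, then exploit the nonquadratic-cost structure: inverting \eqref{eq:opcontrol} gives $G_\mrm{aug}^\T\frac{\partial V}{\partial s_\mrm{aug}}=-2\gamma_1\lambda\,\textrm{tanh}^{-1}(u_c^\star/\lambda)$, which ties the gradient magnitude to $R(u_c^\star)$. A Young's-inequality split then absorbs the gradient-squared piece into $-\tfrac12 R(u_c^\star)$ while producing a residual proportional to $L^2\lambda\gamma_1\|e_\mrm{trig}\|^2$.

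With that estimate, substituting the triggering condition $L^2\lambda\gamma_1\|e_\mrm{trig}\|^2\le(\tfrac12-\beta^2)\underline{\lambda}(Q)\|e_s\|^2+R(u)$ absorbs the $R$ and $\|e_\mrm{trig}\|^2$ contributions and leaves $\dot V\le-\beta^2\underline{\lambda}(Q)\|e_s\|^2+\gamma V$. For $\gamma=0$ this is negative definite in $e_s$, so by Lyapunov's theorem (with LaSalle to upgrade to asymptotic stability) the equilibrium $e_s=0$ is asymptotically stable. For $\gamma\neq 0$ the extra $\gamma V$ term is handled by the standard argument: $\dot V<0$ whenever $\|e_s\|$ exceeds a threshold fixed by $\gamma V$, confining trajectories to a residual set and yielding UUB.

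For Zeno exclusion I would lower-bound the inter-event time. Right after a trigger $e_\mrm{trig}(r_j)=0$, and on $(r_j,r_{j+1}]$ the sampled state is frozen, so $\dot e_\mrm{trig}=-\dot s_\mrm{aug}$; since $F_\mrm{aug},G_\mrm{aug}$ are locally Lipschitz on the relevant compact set and $\|u^\star(\hat s_\mrm{aug})\|\le\lambda$, the vector field is bounded and $\|e_\mrm{trig}(t)\|$ grows at a controlled rate via a Gronwall/comparison estimate. Comparing this growth to the strictly positive right-hand side of the triggering condition shows that reaching the threshold from zero takes a uniformly positive time $\bar r>0$, hence $r_{j+1}-r_j>\bar r$. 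The main obstacle is the middle step: reproducing exactly the coefficients $(\tfrac12-\beta^2)\underline{\lambda}(Q)/(L^2\lambda\gamma_1)$ and $1/(L^2\lambda\gamma_1)$ requires carefully marrying the $\textrm{tanh}$-based control/cost identities with Young's inequality so the gradient term is fully absorbed (and reconciling the bookkeeping between $R(u_c^\star)$ and the applied $R(u)$); a secondary delicate point is the Zeno bound as $e_s\to 0$, where the threshold degenerates and the argument must be localized, e.g.\ to the UUB residual set or away from the equilibrium in the $\gamma=0$ case.
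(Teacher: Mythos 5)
Your overall architecture matches the paper's proof: both use the optimal value function $V$ of \eqref{eq:opvalue} as the Lyapunov candidate, rewrite the intermittent $\dot V$ through the HJB equation \eqref{eq:hjb}, quantify the sampling mismatch via Assumption~\ref{as:lic}, absorb it with the triggering condition to reach $\dot V\le-\beta^2 e_s^{\T}Qe_s+\gamma V$, and then split into asymptotic stability for $\gamma=0$ versus UUB for $\gamma\neq0$ with the same residual set $\Omega_{e_s}$. The genuine gap is precisely the middle step you flagged. Your Cauchy--Schwarz/Young route cannot produce the claimed bound: using $G_{\mathrm{aug}}^{\T}\frac{\partial V}{\partial s_{\mathrm{aug}}}=-2\gamma_1\lambda\tanh^{-1}(u_c^\star/\lambda)$ and choosing the Young weight so that the residual is exactly $L^2\lambda\gamma_1\|e_{\mathrm{trig}}\|^2$, you are left needing
\begin{equation*}
\gamma_1\lambda\left\|\tanh^{-1}\!\big(u_c^\star/\lambda\big)\right\|^2\le\tfrac{1}{2}R(u_c^\star),
\end{equation*}
which fails uniformly. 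Near saturation the left side blows up while $R$ stays finite (for scalar input, $R(u)\to\gamma_1\lambda^2\ln 2$ as $u\to\lambda$), and in the small-signal regime, where $R(u)\approx\tfrac{\gamma_1}{2}\|u\|^2$ and $\tanh^{-1}(u/\lambda)\approx u/\lambda$, the inequality reduces to $\gamma_1\|u\|^2/\lambda\le\gamma_1\|u\|^2/4$, i.e., it holds only if $\lambda\ge 4$ --- not a scale-invariant fact. There is a second, compounding bookkeeping error: by spending $-\tfrac{1}{2}R(u_c^\star)$ on the absorption, you leave the $+R(u)$ term on the right-hand side of the triggering condition with nothing to cancel against (and $R(u_c^\star)\neq R(u^\star(\hat s_{\mathrm{aug}}))$ in any case), so your final estimate would carry an uncontrolled positive $R$ term rather than the clean $\dot V\le-\beta^2 e_s^{\T}Qe_s+\gamma V$.

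The paper sidesteps both issues: following \cite{modares2014optimal}, it rewrites $\dot V$ so that the full $-R(u^\star(\hat s_{\mathrm{aug}}))$ --- the $R$ of the \emph{applied} (sampled) control, matching the $R(u)$ in the trigger --- survives intact, and the sampling-induced residual appears as the integral term $\int_{u^\star(s_{\mathrm{aug}})}^{u^\star(\hat s_{\mathrm{aug}})}(\cdots)\,\dt v$, for which the bound $\le L^2\lambda\gamma_1\|e_{\mathrm{trig}}\|^2$ is imported from the event-triggered saturated-control literature, resting only on Assumption~\ref{as:lic}. To close your proof you should adopt this integral identity in place of Cauchy--Schwarz plus Young; no absorption into $R$ is then needed, and the trigger's $R(u)$ cancels exactly. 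On Zeno exclusion, you in fact offer more than the paper, which defers entirely to \cite{yang2020safe}: your Gronwall-type sketch is the standard argument, and your concern about the threshold degenerating as $e_s\to0$ is legitimate, but it is likewise not resolved in the paper's one-line treatment, so it cannot be held against you relative to the paper's own proof.
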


\begin{proof}
Consider the time-derivative of the optimal value function \eqref{eq:opvalue}, along the trajectories of the augmented system \eqref{eq:error_s} under the intermittent policy \eqref{eq:evopcontrol} over an inter-event period, i.e., $\forall t\in (r_{j},r_{j+1}]$
\begin{align}\label{eq:lyap11}
\dot{V}=\frac{\partial V}{\partial s_\mrm{aug}}^\textrm{T}\big(F_\mrm{aug}(s_\mrm{aug})+G_\mrm{aug}(s_\mrm{aug}) u^\star(\hat{s}_\mrm{aug})\big).
\end{align}
Now, we note that the continuously triggered HJB  \eqref{eq:hjb} can be rewritten as
\begin{align*}
0 &= \frac{1}{2}s_\mrm{aug}^\textrm{T} Q_\mrm{aug} s_\mrm{aug} + \frac{\partial V}{\partial s_\mrm{aug}}^\textrm{T}\big(F_\mrm{aug}(s_\mrm{aug})) -\gamma V \\ &\quad+\lambda^2\gamma_1 \textrm{ln}\bigg(1-\textrm{tanh}^2\big(\frac{1}{2\gamma_1 \lambda}G^\textrm{T}_\mrm{aug}(s_\mrm{aug})\frac{\partial V}{\partial s_\mrm{aug}}\big)\bigg),
\end{align*}
which yields
\begin{align}\label{eq:lyap2}
\dot{V}&=-\frac{1}{2}s_\mrm{aug}^\textrm{T} Q_\mrm{aug} s_\mrm{aug} +\gamma V +\frac{\partial V}{\partial s_\mrm{aug}}^\textrm{T}(G_\mrm{aug}u^\star(\hat{s}_\mrm{aug})) \nonumber\\ &\quad-\lambda^2\gamma_1 \textrm{ln}\bigg(1-\textrm{tanh}^2\big(\frac{1}{2\gamma_1 \lambda}G^\textrm{T}_\mrm{aug}(s_\mrm{aug})\frac{\partial V}{\partial s_\mrm{aug}}\big)\bigg),
\end{align}
and, due to the structure of the optimal control function, we can get by following \cite{modares2014optimal}
\begin{align}
\dot{V}&=-\frac{1}{2} s_\mrm{aug}^\textrm{T} Q_\mrm{aug} s_\mrm{aug}+\gamma V(s_\mrm{aug})-R(u^\star(\hat{s}_\mrm{aug}) )\nonumber\\&\quad+2\gamma\lambda \int_{u^\star(s_\mrm{aug})}^{u^\star(\hat{s}_\mrm{aug})}(\frac{1}{2\gamma_1 \lambda}G_\mrm{aug}^\textrm{T}\big(\hat{s}_\mrm{aug})+\mrm{tanh}^{-1}(\frac{v}{\lambda})\big)\textrm{d}v,\nonumber
\end{align}

It has been shown that the following holds
\begin{align*}
    2\gamma\lambda \int_{u^\star(s_\mrm{aug})}^{u^\star(\hat{s}_\mrm{aug})}(\frac{1}{2\gamma_1 \lambda}G_\mrm{aug}^\textrm{T}\big(\hat{s}_\mrm{aug})+\mrm{tanh}^{-1}(\frac{v}{\lambda})\big)\textrm{d}v\\ \leq L^2 \lambda \gamma_1 \|e_\mrm{trig}\|^2.
\end{align*}

Note that the term $s_\mrm{aug}^\textrm{T} Q_\mrm{aug} s_\mrm{aug}$ is equivalent to $e_s^\textrm{T} Q e_s$ since the rest of the terms in $Q_\mrm{aug}$ are zero. Hence,
\begin{align}\label{eq:vdot1}
\dot{V}&\le-\frac{1}{2} e_s^\textrm{T} Q e_s+\gamma V(s_\mrm{aug})-R(u^\star(\hat{s}_\mrm{aug})) \\&\quad+ L^2 \lambda \gamma_1 \|e_\mrm{trig}\|^2,\nonumber
\end{align}
Therefore, for some $\beta\in(0,\frac{1}{\sqrt{2}})$, given the condition
\begin{equation}\label{eq:ub}
    L^2 \lambda \gamma_1 \|e_{\mrm{trig}}\|^2\le(\frac{1}{2}-\beta^2)\underline{\lambda}(Q)\|e_{s}\|^2+R(u^\star(\hat{s}_\mrm{aug})),
\end{equation}
the equation \eqref{eq:vdot1} becomes
\begin{equation}\nonumber
\dot{V}\le-\beta^2e_s^\textrm{T} Q e_s+\gamma V(s_\mrm{aug}).
\end{equation}
Thus, given that the discount factor is picked such that $\gamma=0$, then $\dot{V}\le-\beta^2e_s^\textrm{T} Q e_s$, and the closed-loop dynamics are asymptotically stable. We also know that $V$ is positive definite and bounded in a compact set $\Omega\subseteq \R^{2n}$ as, $\sup_{s_\mrm{aug}\in\Omega} \left\|V (s_\mrm{aug})\right\|\leq V_\mrm{max}$. Then from \eqref{eq:ub} we conclude that $\dot{V}<0$ whenever $e_s$ lies outside the compact set $\Omega_{e_s}=\big\{e_s:\left\|e_s\right\|\leq \sqrt{\frac{\gamma V_\mrm{max}}{\beta^2\underline{\lambda}(Q)}}\big\}$. Therefore, $e_s$ is UUB \cite{khalil2002nonlinear} for $\gamma\neq0$.
The exclusion of Zeno behavior follows directly from \cite{yang2020safe}.
\frQED

\end{proof}


\section{Learning Algorithm}
In this section, we employ approximation structures to solve the optimal tracking problem in a data-driven way, while guaranteeing safety constraints.
\subsection{Critic Approximator}
Initially, we employ a critic approximator, that will be able to estimate the optimal value function that solves the HJB equation. It is known that the optimal value function can be expressed as
\begin{align}\label{eq:critic}
V^\star(s_\mrm{aug})={\theta^\star_c}^\textrm{T}\phi_c(s_\mrm{aug})+\epsilon_{c}(s_\mrm{aug}), \quad\forall s_\mrm{aug}, 
\end{align}
where $\theta^\star_c\in\reals^h$ are unknown ideal weights which are bounded as $\left\|\theta^\star_c\right\|\leq \theta_\mrm{cmax}$. Furthermore, $\phi_c\eqdef[\phi_{1} \;\phi_{2} \;\cdots
\;\phi_{h}]:\reals^{2n}\rightarrow\reals^h$, is a bounded $C^1$ basis function, i.e., with bounded first order derivatives, so that $\left\|\phi_c\right\|\leq \phi_\mrm{cmax}$ and $\left\|\frac{\partial \phi_c}{\partial x}\right\|\leq\phi_\mrm{dcmax}$, and $h$ is the number of basis. Finally, $\epsilon_{c}:\reals^{2n}\rightarrow\reals$ is the approximation error.
\begin{remark}
The residual error and its derivative are bounded inside any compact set $\Omega\subset \reals^n$, i.e., $\sup\limits_{s_\mrm{aug}\in\Omega} \left\|\epsilon_c(s_\mrm{aug})\right\|\leq \epsilon_\mrm{cmax}$ and $\sup\limits_{s_\mrm{aug}\in\Omega} \left\|\frac{\partial \epsilon_c(s_\mrm{aug})}{\partial s_\mrm{aug}}\right\|\leq \epsilon_\mrm{dcmax}$. In addition, the residual error and its derivative converge uniformly to zero in $\Omega$ as $h\rightarrow\infty$, if $\phi_c$ is chosen so that it represents a complete basis of $V^\star$ inside $\Omega$ \cite{hornik1990universal}. \frqed
\end{remark}
Based on this, the optimal intermittent policy in
can be re-written, $\forall t\in (r_{j},r_{j+1}]$, as
\begin{align}\label{eq:starvalue1}
u^\star(\hat{s}_\mrm{aug})&=-\lambda \textrm{tanh}\big(\frac{1}{2\gamma_1 \lambda}G_\mrm{aug}^\textrm{T}(\hat{s}_\mrm{aug})\times\\ &\times\big(\frac{\partial \phi(\hat{s}_\mrm{aug})}{\partial \hat{s}_\mrm{aug}}^\textrm{T} \theta^\star_c
+\frac{\partial \epsilon_c(\hat{s}_\mrm{aug})}{\partial \hat{s}_\mrm{aug}}\big)\big). \nonumber
\end{align}
\subsection{Actor Approximator}
We employ another approximating structure, called an \emph{actor}, to approximate the intermittent controller \eqref{eq:starvalue1}. This is expressed, $\forall t\in(r_{j},r_{j+1}]$, as
\begin{align}\label{eq:actor}
u^\star(\hat{s}_\mrm{aug})={\theta_u^\star}^\textrm{T} \phi_u(\hat{s}_\mrm{aug})+\epsilon_u(\hat{s}_\mrm{aug}), \ \forall \hat{s}_\mrm{aug},
\end{align}
where $\theta_u^\star\in\reals^{h_2\times m}$ are the optimal weights, $\phi_u$ are the basis functions defined similarly to the critic approximator, $h_2$ is the number of basis, and $\epsilon_u$ is the actor approximation error.
\begin{remark}
The approximation error $\epsilon_u$ is bounded over a compact set $\Omega\subset \reals^n$, so that $\sup\limits_{\hat{s}_\mrm{aug}\in\Omega}\left\|\epsilon_u\right\|\leq \epsilon_\mrm{umax}$. In addition, the approximation basis functions are chosen so that they are upper bounded by positive constants, i.e, $\left\|\phi_u\right\|\leq \phi_\mrm{umax}$. \frqed
\end{remark}
Thus, the current estimates of the value function and the optimal policy are derived based on estimations of the ideal critic and actor weights, denoted $\hat{\theta}_c$ and $\hat{\theta}_u$, respectively
\begin{align}\label{eq:apcritic}
\hat{V}(s_\mrm{aug}(t))=\hat{\theta}_c^\textrm{T}\phi_c(s_\mrm{aug}(t)), \forall s_\mrm{aug},
\end{align}
\begin{align}\label{eq:apactor}
\hat{u}(\hat{s}_\mrm{aug})=\hat{\theta}_u^\textrm{T}\phi_u(\hat{s}_\mrm{aug}), \forall \hat{s}_\mrm{aug}.
\end{align}
Now, the learning mechanism comprises tuning laws that will allow us to obtain optimal estimates to the critic and actor weights. Towards this, we define the estimation error $e_c\in\reals$ based on the Hamiltonian function as
\begin{align*}
e_c&=H(s_\mrm{aug},\hat{u}(\hat{s}_\mrm{aug}),\frac{\partial \hat{V}(s_\mrm{aug})}{\partial s_\mrm{aug}})\\&\quad-H(s_\mrm{aug},u_c^\star(s_\mrm{aug}),\frac{\partial V^\star(s_\mrm{aug})}{\partial s_\mrm{aug}})\nonumber\\&
=\hat{\theta}_c^\textrm{T} \frac{\partial \phi_c}{\partial s_\mrm{aug}}\big(F_\mrm{aug}(s_\mrm{aug})+G_\mrm{aug}(s_\mrm{aug})\hat{u}(\hat{s}_\mrm{aug})\big)-\gamma\hat{\theta}_c^\textrm{T} \phi_c+\hat{r}\\&
=\hat{\theta}_c^\textrm{T} \omega+\hat{r},
\end{align*}
with $\omega=\frac{\partial \phi_c}{\partial s_\mrm{aug}}\big(F_\mrm{aug}(s_\mrm{aug})+G_\mrm{aug}(s_\mrm{aug})\hat{u}(\hat{s}_\mrm{aug})\big)-\gamma \phi_c$, $\hat{r}=\frac{1}{2}s_\mrm{aug}^\textrm{T} Q_\mrm{aug} s_\mrm{aug}+R(\hat{u}(\hat{s}_\mrm{aug}))$ and $H(s_\mrm{aug},u_c^\star(s_\mrm{aug}),\frac{\partial V^\star(s_\mrm{aug})}{\partial s_\mrm{aug}})=0$ from \eqref{eq:hjb}, and for simplicity we omit the dependence of $\phi_c$ on the augmented $s$-state. 
\par
To drive the error $e_c$ to zero one has to pick the critic weights appropriately. By defining the squared-norm error as  $E_c=\frac{1}{2} e_c^2$ we can apply the normalized gradient descent method to obtain the estimate of the critic weights as
{\small
\begin{align}\label{eq:rls}
\dot{\hat{\theta}}_c=-\alpha\frac{1}{(\omega^\textrm{T}\omega +1)^2} \frac{\partial E_c}{\partial \hat{\theta}_c}= -\alpha\frac{\  \omega}{(\omega^\textrm{T}\omega +1)^2}e_c,
\end{align}}
where $\alpha\in\reals^+$ is a tuning parameter.

By defining the critic error dynamics as $\tilde{\theta}_c=\theta^\star_c-\hat{\theta}_c$ and taking its derivative with respect to time one has
\begin{align}\label{eq:criticerror}
\dot{\tilde{\theta}}_c=-\alpha \frac{\omega\ \omega^\textrm{T}}{(\omega^\textrm{T}\omega +1)^2}\tilde{\theta}_c+\alpha\frac{\omega}{(\omega^\textrm{T}\omega +1)^2}\epsilon_\mrm{Hc},
\end{align}
where $\epsilon_\mrm{Hc}=-\frac{\partial \epsilon_c}{\partial x}(F+G \hat{u}), \forall x, \hat{u}$, is upper bounded by $\epsilon_\mrm{Hcmax}\in\reals^+$ as $\left\|\epsilon_{Hc}\right\|\leq\epsilon_\mrm{Hcmax}$.

To state stability results on the derived learning system, we can consider the critic error dynamics as a sum of nominal dynamical behavior with a time dependent perturbation due to the approximation error, denoted respectively as $S_\mrm{N}$ and $S_\mrm{P}$, where $\dot{\tilde{\theta}}_c=S_\mrm{N}+S_\mrm{P}$ with $S_\mrm{N}=-\alpha \frac{\omega\ \omega^\textrm{T}}{(\omega^\textrm{T}\omega +1)^2}\tilde{\theta}_c$ and $S_\mrm{P}=\alpha\frac{\omega}{(\omega^\textrm{T}\omega +1)^2}\epsilon_\mrm{Hc}$.

\begin{theorem}\label{th:KhalilTheorem}
Assume that the signal $M=\frac{\omega}{(\omega^\textrm{T}\omega +1)}$ is persistently exciting, i.e., $\int_t^{t+T}M M^\textrm{T} d\tau\geq b I$, $\forall t\ge 0$ for some $b,~T\in\reals^+$, where $I$ is an identity matrix of appropriate dimensions, and also that there exists $M_b\in\reals^+$ such that for all $t\geq 0$, $\max\big\{\left|M\right|,\left|\dot{M}\right|\big\}\leq M_B$. Then, the nominal system $S_\mrm{N}$ is exponentially stable and its trajectories satisfy $\left\|\tilde{\theta}_c(t)\right\|\leq \left\|\tilde{\theta}_c(0)\right\|\kappa_1 e^{- \kappa_2t}$, for some $\kappa_1,\ \kappa_2\in\reals^+$ and for all $t\ge0$.
\end{theorem}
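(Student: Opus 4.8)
The plan is to recognize the nominal dynamics as the linear time-varying (LTV) gradient flow $\dot{\tilde\theta}_c = -\alpha\, M M^\textrm{T}\tilde\theta_c$, obtained by writing $S_\mathrm{N}$ in terms of $M=\omega/(\omega^\textrm{T}\omega+1)$, and to establish uniform exponential stability of its origin directly from the persistence-of-excitation (PE) hypothesis. Because the system is linear in $\tilde\theta_c$, uniform asymptotic stability and uniform exponential stability coincide, so it suffices to exhibit a uniform contraction of $\|\tilde\theta_c\|$ over windows of length $T$. I would start from the Lyapunov candidate $V(\tilde\theta_c)=\tfrac12\tilde\theta_c^\textrm{T}\tilde\theta_c$, whose derivative along $S_\mathrm{N}$ is $\dot V=-\alpha\,\tilde\theta_c^\textrm{T}MM^\textrm{T}\tilde\theta_c\le 0$. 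This immediately gives uniform stability and, crucially, monotone non-increase of $\|\tilde\theta_c(t)\|$, which will be used repeatedly below.

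The core step converts the pointwise semidefinite bound into a strict decrease over each interval $[t,t+T]$. Integrating $\dot V$ yields $V(t+T)-V(t)=-\alpha\int_t^{t+T}\|M^\textrm{T}(\tau)\tilde\theta_c(\tau)\|^2\,\textrm{d}\tau$. I would then compare $\tilde\theta_c(\tau)$ inside the integral to the frozen value $\tilde\theta_c(t)$, controlling the drift via $\|\tilde\theta_c(\tau)-\tilde\theta_c(t)\|\le\int_t^\tau\|\dot{\tilde\theta}_c\|\,\textrm{d}s\le \alpha M_B^2\,T\,\|\tilde\theta_c(t)\|$, where the last bound uses $\|M\|\le M_B$ and the monotonicity established above. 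Applying $\|a\|^2\ge\tfrac12\|a+b\|^2-\|b\|^2$ with $a=M^\textrm{T}\tilde\theta_c(\tau)$ and the PE inequality $\int_t^{t+T}MM^\textrm{T}\,\textrm{d}\tau\ge bI$ lower-bounds the frozen part by $\tfrac{b}{2}\|\tilde\theta_c(t)\|^2$, while the drift contributes only a correction controlled by $\alpha$, $M_B$, and $T$. This produces $V(t+T)\le\rho\,V(t)$ with $\rho=1-2\alpha\big(\tfrac{b}{2}-O(\alpha^2 M_B^6 T^3)\big)\in(0,1)$. Iterating this contraction over consecutive windows and interpolating on each window (again by monotonicity of $V$) gives $\|\tilde\theta_c(t)\|\le\kappa_1\|\tilde\theta_c(0)\|e^{-\kappa_2 t}$ with $\kappa_2=-\tfrac{1}{2T}\ln\rho$ and $\kappa_1=\rho^{-1/2}$, which is exactly the claimed estimate. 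Equivalently, since the hypotheses (PE together with $M$ and $\dot M$ bounded) are precisely those of the classical gradient-estimator lemma, one may invoke the standard result of \cite{khalil2002nonlinear} to conclude uniform exponential stability of $S_\mathrm{N}$.

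The main obstacle is the drift estimate in the middle step: because $M(\tau)M^\textrm{T}(\tau)$ is only positive \emph{semi}definite (indeed rank one) at each instant, there is no pointwise strict decrease, and the entire decay must be extracted from the time integral. Freezing $\tilde\theta_c$ at the left endpoint is legitimate only because boundedness of $M$ (and $\dot M$) makes $\tilde\theta_c$ slowly varying relative to the excitation window, and the delicate part is the constant bookkeeping that guarantees the drift correction does not overwhelm the PE gain $b$, so that $\rho<1$ genuinely holds; this is what ultimately constrains the admissible range of the learning rate $\alpha$. Everything else—monotonicity of $V$, the window-to-exponential iteration, and reading off $\kappa_1$ and $\kappa_2$—is routine once this uniform contraction is in hand.
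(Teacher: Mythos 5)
Your proposal is correct in substance, but it takes a genuinely different route from the paper: the paper does not actually prove Theorem~\ref{th:KhalilTheorem} --- its entire proof is the one-line delegation ``The proof follows from \cite{vamvoudakis2017event}'' --- whereas you reconstruct, from scratch, the classical persistence-of-excitation argument that such citations ultimately rest on. Your self-contained proof (rewriting $S_\mathrm{N}$ as $\dot{\tilde{\theta}}_c=-\alpha MM^\textrm{T}\tilde{\theta}_c$, taking $V=\tfrac{1}{2}\|\tilde{\theta}_c\|^2$, freezing the state at the left endpoint of each PE window, and iterating the contraction $V(t+T)\leq\rho V(t)$) is the standard adaptive-control argument, and it buys explicit constants $\kappa_1=\rho^{-1/2}$ and $\kappa_2=-\ln\rho/(2T)$ in terms of $\alpha$, $b$, $T$, $M_B$, which the paper's citation hides; it also reveals that only the bound $\|M\|\leq M_B$ is used, not the bound on $\dot{M}$. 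One caveat deserves flagging: your drift bookkeeping yields $\rho<1$ only when $\alpha^2M_B^6T^3<b/2$, so as written you prove the theorem only for sufficiently small learning rate $\alpha$, whereas the statement (and the standard lemma) impose no such restriction. This restriction is an artifact of freezing against the \emph{state} norm; it disappears if you instead bound the drift by Cauchy--Schwarz against the same excitation integral, $\|\tilde{\theta}_c(\tau)-\tilde{\theta}_c(t)\|^2\leq\alpha^2M_B^2(\tau-t)\int_t^\tau\|M^\textrm{T}\tilde{\theta}_c(s)\|^2\,\textrm{d}s$, which gives $b\|\tilde{\theta}_c(t)\|^2\leq\big(2+\alpha^2M_B^4T^2\big)\int_t^{t+T}\|M^\textrm{T}\tilde{\theta}_c\|^2\,\textrm{d}\tau$ and hence $\rho=1-2\alpha b/\big(2+\alpha^2M_B^4T^2\big)$; since $b\leq M_B^2T$ forces $2+\alpha^2M_B^4T^2-2\alpha b\geq 1+(\alpha M_B^2T-1)^2>0$, this $\rho$ lies in $(0,1)$ for \emph{every} $\alpha>0$. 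Alternatively, your fallback of invoking the standard gradient-estimator lemma of \cite{khalil2002nonlinear} already covers all $\alpha$, so with either repair your argument fully establishes the claim.
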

\begin{proof}
The proof follows from \cite{vamvoudakis2017event}. \frQED
\end{proof}
Now, to derive the tuning laws for the actor approximator, we consider the error $e_u\in\reals^m$ given, $\forall \hat{s}_\mrm{aug}$, by
\begin{align*}
e_u 	&=\hat{u}-u_{\hat{\theta}_c}\\
	&=\hat{\theta}^\textrm{T}_u \phi_u(\hat{s}_\mrm{aug})+ \lambda \textrm{tanh}\big(\frac{1}{2\gamma_1 \lambda}G_\mrm{aug}^\textrm{T}(\hat{s}_\mrm{aug}) \big(\frac{\partial \phi(\hat{s}_\mrm{aug})}{\partial \hat{s}_\mrm{aug}}^\textrm{T} \theta^\star_c\big),
\end{align*}
where $u_{\hat{\theta}_c}$ is the controller based on the critic weights $\hat{\theta}_c$.

The objective is to select $\hat{\theta}_u$ such that the error $e_u$ goes to zero. As such, we minimize the following error cost
\begin{align*}
E_u=\frac{1}{2} e_u^\textrm{T} e_u.
\end{align*}
In keeping with our objective to avoid over-utilization of the system's resources, we let the actor learn in an aperiodic fashion, by updating the weights only at the triggering instances, and keeping them constant between them. This gives the system an impulsive nature, whose behavior we investigate based on results of \cite{haddad2006impulsive} and \cite{hespanha2008lyapunov}.

Then, the update laws are given by
\begin{align}\label{eq:cactor}
\dot{\hat{\theta}}_u(t)=0,\ \forall t\in\reals^+ \setminus \underset{{j\in\mathbb{N}}}{\bigcup} r_j ,
\end{align}
and the jump equation to compute $\hat{\theta}_u(r_j^+)$ given, $\text{for } t=r_j$, by
\begin{align}\label{eq:dactor}
&{\hat{\theta}}_u^+=\hat{\theta}_u-\alpha_u \phi_u({x}_\mrm{aug}(t))\bigg(\hat{\theta}_u^\textrm{T} \phi_u(s_\mrm{aug}(t))\nonumber\\&+\lambda \textrm{tanh}\bigg(\frac{1}{2\gamma_1 \lambda}G_\mrm{aug}^\textrm{T}(\hat{s}_\mrm{aug}) \frac{\partial \phi(\hat{s}_\mrm{aug})}{\partial \hat{s}_\mrm{aug}}^\textrm{T} \theta^\star_c\bigg)^\textrm{T}.
\end{align}
 By defining the actor error dynamics as $\tilde{\theta}_u=\theta_u^\star-\hat{\theta}_u$ and taking the time derivative using the continuous update \eqref{eq:cactor} and by using the jump system \eqref{eq:dactor} updated at the trigger instants one has
\begin{align}\label{eq:ecactor}
\dot{\tilde{\theta}}_u(t)=0,\ \forall t\in\reals^+ \setminus \underset{{j\in\mathbb{N}}}{\bigcup} r_j ,
\end{align}
and, $\text{for } t=r_j$
\begin{align}\label{eq:edactor}
&{\tilde{\theta}}_u^+=\tilde{\theta}_u-\alpha_u \phi_u(s_\mrm{aug}(t)) \phi_u(s_\mrm{aug}(t))^\textrm{T} \tilde{\theta}_u(t)\nonumber\\&
\\&-\lambda \textrm{tanh}\big(\frac{1}{2\gamma_1 \lambda}G_\mrm{aug}^\textrm{T}(\hat{s}_\mrm{aug}) \big(\frac{\partial \phi(\hat{s}_\mrm{aug})}{\partial \hat{s}_\mrm{aug}}^\textrm{T}+\frac{\partial \epsilon_c(\hat{s}_\mrm{aug})}{\partial \hat{s}_\mrm{aug}}^\textrm{T}\big) \theta^\star_c\big),
\end{align}
respectively.
Note that the solution of \eqref{eq:ecactor}-\eqref{eq:edactor} is left continuous; that is, it is continuous everywhere except at the resetting times $r_j$
\begin{align*}
&\hat{\theta}_u(r_j)=\lim _{\delta\rightarrow 0^+}  \hat{\theta}_u(r_j-\delta), ~ \forall j \in \mathbb{N},
\end{align*}
and
\begin{align*}
{\hat{\theta}}_u^+&=\hat{\theta}_u-\alpha_u \phi_u({x}_\mrm{aug}(t))\bigg(\hat{\theta}_u^\textrm{T} \phi_u(s_\mrm{aug}(t))\\&\qquad+ \lambda \textrm{tanh}\big(\frac{1}{2\gamma_1 \lambda}G_\mrm{aug}^\textrm{T}(\hat{s}_\mrm{aug}) \frac{\partial \phi(\hat{s}_\mrm{aug})}{\partial \hat{s}_\mrm{aug}}^\textrm{T} \theta^\star_c\bigg)^\textrm{T},\ t=r_j.
\end{align*}

\section{Simulation Results}
To validate the effectiveness of the proposed framework, we solve a safety-critical task described as a temporal logic specification, which can be decomposed as a sequence of event-triggered optimal tracking control problems. Same as the example in Sec. \ref{sec:dec}, we consider the LTL specification $\phi=\phi_{\textrm{c}}\wedge\phi_{\textrm{s}}$, with $\phi_{\textrm{c}}=\Diamond p_2\  \wedge \ (\lnot p_2 \mathcal{U} p_1)$ and $\phi_{\textrm{s}}=\Box p_3$, where $p_1$, $p_2$ are to track different target trajectories and $p_3$ is a safe zone to stay in. Given the FSA constructed based on $\phi_{\textrm{c}}$ as shown in Fig. \ref{f:automaton}, Problem \ref{Problem:safety} can be decomposed into two sub-problems (Problem \ref{Problem:safety2}). Due to space limitations, we only show the results of the first sub-problem (i.e., reaching $p_1$), and the results for the second sub-problems are omitted as they can be obtained in a similar manner. The safety constraint $p_3$ is defined as $\mathcal{Q}= \{x\in\reals^n|c \le Ax + r \le C\}$, where $A = I$ (the identity matrix), $r = \mathbf{0}_{4\times 1}$ and $c = -30\times \mathbf{1}_{4\times 1}, C = -c$.

We set the predicate $p_1=(\vert\vert x(t)-z_1(t)\vert\vert\le \epsilon)$, where the trajectory to be tracked is $z = 0.5\times[\sin{0.5t}, \cos{0.5t}]^\T$, and $\epsilon=0.6$. The system dynamics are given as $\dot{x} = f(x) + g(x)u$, where,
\begin{equation*}f(x) = \begin{bmatrix}
-x_1+x_2\\
-0.5x_1-0.5x_2(1-(\cos{2x_1}+2)^2
\end{bmatrix},
\end{equation*}
and
\begin{equation*}g(x) = \begin{bmatrix}
0\\
\cos{2x_1}+2
\end{bmatrix}.
\end{equation*}

For the learning algorithm, the initial actor and critic weights are picked randomly in $[0, 1]$. The user-defined parameters are selected as $Q = 800I$. The evolution of the tracking error is shown in Fig. \ref{f:error}. In Fig.\ref{f:control} we present the control input of the system after the exploration noise has been sufficiently decreased. It can be seen that due to the nature of the desired trajectory, the controller does not achieve perfect tracking, but rather renders the tracking error UUB. It is validated that the safety constrains are satisfied while the tracking error is decreasing and eventually bounded as proved.

\begin{figure}[!ht]
	\vspace{-0.1cm}
	\hspace{1cm}
	\centering
	\includegraphics[width=\linewidth]{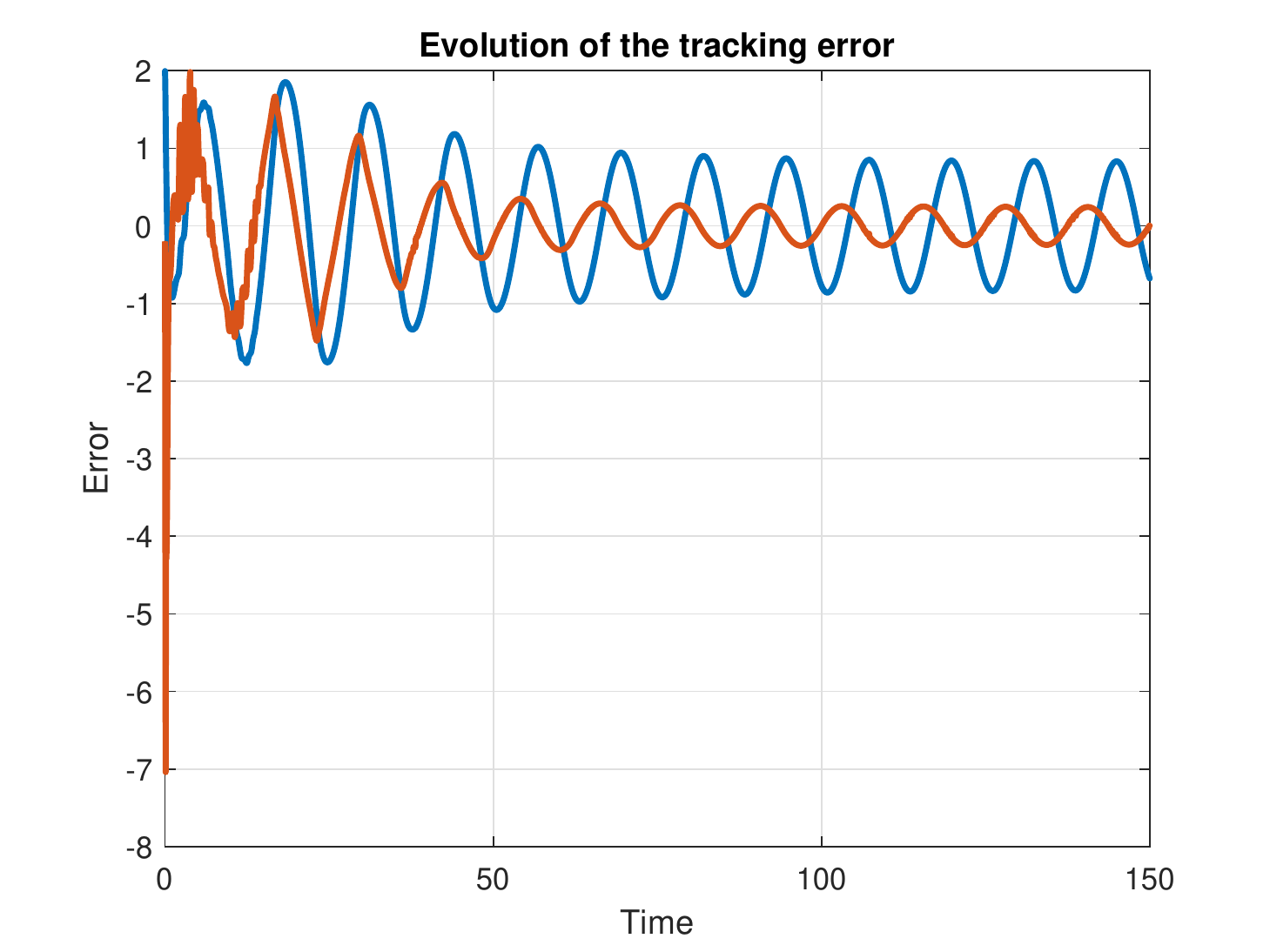}
	\vspace{-0.2cm}
	\caption{The evolution of the tracking error of the states.}
	\label{f:error}
\end{figure}

\begin{figure}[!ht]
	\vspace{-0.1cm}
	\hspace{1cm}
	\centering
	\includegraphics[width=\linewidth]{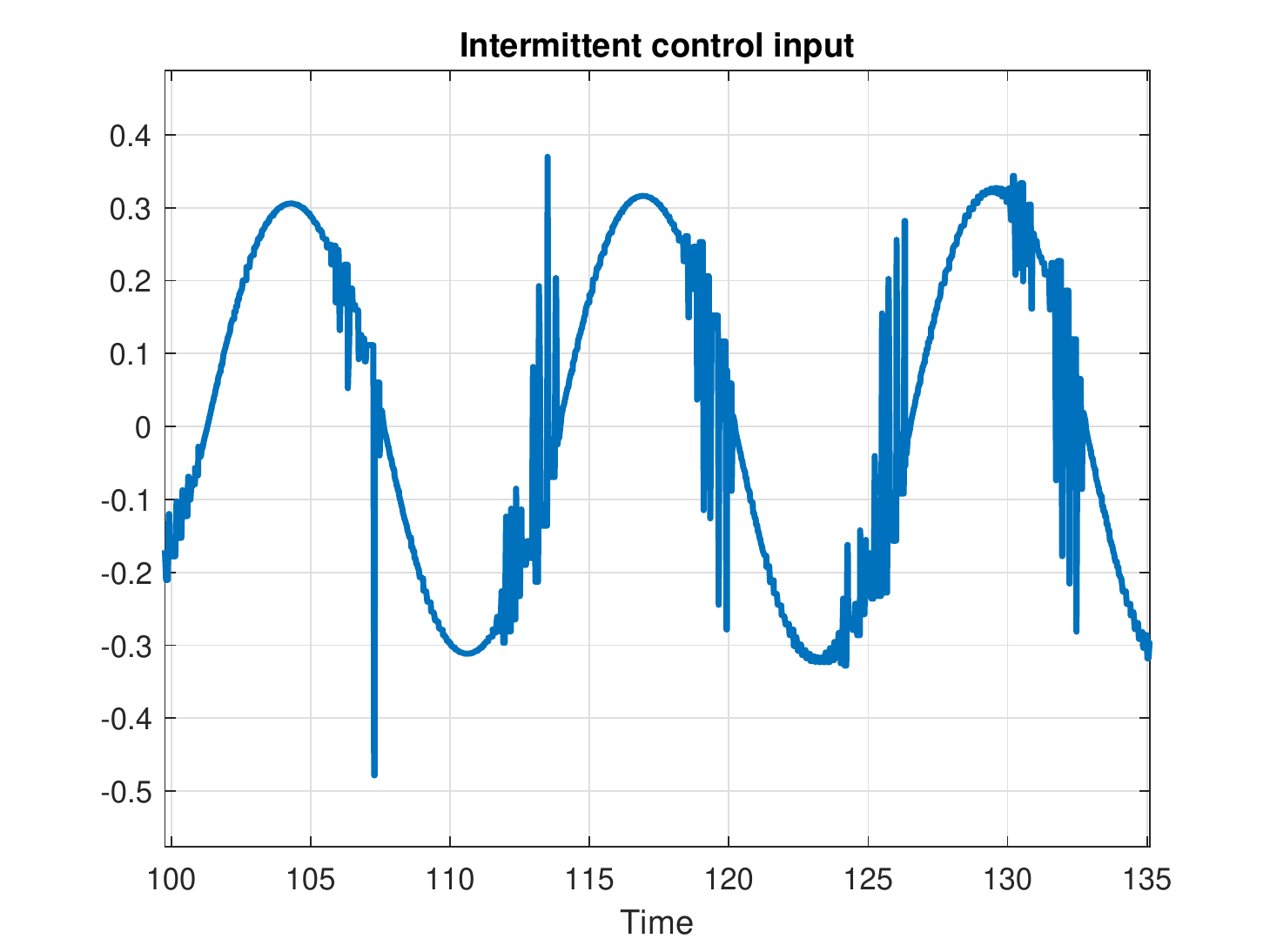}
	\vspace{-0.2cm}
	\caption{Intermittent control policy.}
	\label{f:control}
\end{figure}

\section{Conclusion and Future Work}
In this paper, we developed an intermittent learning framework for a system tasked with a complex mission while guaranteeing safety. We brought together ideas from LTL and control-oriented RL to decompose the mission into a sequence of tracking sub-problems which are constrained by safety specifications. We convert the system using barrier functions, thus, deriving an unconstrained optimal tracking problem in the transformed state space. The tracking problem was tackled via the construction of intermittent policies and guarantees of stability and optimality were presented. Finally, to circumvent the issues arising from the difficulty of solving the underlying HJB equations, we used an RL algorithm to obtain estimated versions of the intermittent safe optimal control policies.

Future work will focus on extensions to the settings of distributed multi-agent systems.

\bibliographystyle{IEEEtran}
\bibliography{reference}


\begin{comment}
\documentclass[conference]{IEEEtran}
\IEEEoverridecommandlockouts
\usepackage{cite}
\usepackage{amsmath,amssymb,amsfonts}
\usepackage{algorithmic}
\usepackage{graphicx}
\usepackage{textcomp}
\usepackage{xcolor}

\usepackage{graphicx}
\usepackage{epsfig}
\usepackage{color}
\usepackage{enumerate}

\newcommand{\bm}[1]{\mbox{$\boldmath{#1}$}}
\newcommand{\mc}[1]{\mathcal{#1}}
\newcommand{\field}[1]{{\mathbb{#1}}}
\newcommand{\bea}{\begin{eqnarray}}
\newcommand{\eea}{\end{eqnarray}}
\newcommand{\beas}{\begin{eqnarray*}}
\newcommand{\eeas}{\end{eqnarray*}}
\newcommand{\leftm}{\left[\begin{array}}
\newcommand{\rightm}{\end{array}\right]}
\newcommand{\reals}{\mbox{$\mathbb R$}}
\newcommand{\complex}{\mbox{$\mathbb C$}}
\newcommand{\ones}{\textbf{1}}
\newcommand{\zeros}{\textbf{0}}
\newcommand{\mA}{\mathcal{A}}
\newcommand{\mI}{\mathcal{I}}
\newcommand{\mC}{\mathcal{C}}
\newcommand{\mL}{\mathcal{L}}
\newcommand{\mV}{\mathcal{V}}
\newcommand{\mN}{\mathcal{N}}
\newcommand{\blue}[1]{{\color{blue}#1}}
\newcommand{\red}[1]{{\color{red}#1}}
\newcommand{\xa}{x_{\text{aug}}}
\DeclareMathOperator*{\argmax}{arg\,max}
\DeclareMathOperator*{\argmin}{arg\,min}
\DeclareMathOperator*{\argsup}{arg\,sup}
\DeclareMathOperator*{\softmax}{softmax}
\newcommand{\T}{\textrm{T}}
\newcommand{\dt}{\textrm{d}}

\newtheorem{thm}{Theorem}
\newtheorem{cor}[thm]{Corollary}
\newtheorem{theo}[thm]{Theorem}
\newtheorem{prop}[thm]{Proposition}
\newtheorem{lem}[thm]{Lemma}
\newtheorem{definition}[thm]{Definition}
\newtheorem{rem}[thm]{Remark}
\newtheorem{Assumption}[thm]{Assumption}
\newtheorem{Lemma}[thm]{Lemma}
\newtheorem{Problem}[thm]{Problem}
\newtheorem{Example}[thm]{Example}
\newtheorem{remark}[thm]{Remark}

\def\BibTeX{{\rm B\kern-.05em{\sc i\kern-.025em b}\kern-.08em
    T\kern-.1667em\lower.7ex\hbox{E}\kern-.125emX}}
\begin{document}

\title{Temporal logic constrained optimal tracking control with Guaranteed Safety\\
\thanks{Identify applicable funding agency here. If none, delete this.}
}

\author{\IEEEauthorblockN{1\textsuperscript{st} Given Name Surname}
\IEEEauthorblockA{\textit{dept. name of organization (of Aff.)} \\
\textit{name of organization (of Aff.)}\\
City, Country \\
email address or ORCID}
\and
\IEEEauthorblockN{2\textsuperscript{nd} Given Name Surname}
\IEEEauthorblockA{\textit{dept. name of organization (of Aff.)} \\
\textit{name of organization (of Aff.)}\\
City, Country \\
email address or ORCID}
\and
\IEEEauthorblockN{3\textsuperscript{rd} Given Name Surname}
\IEEEauthorblockA{\textit{dept. name of organization (of Aff.)} \\
\textit{name of organization (of Aff.)}\\
City, Country \\
email address or ORCID}
\and
\IEEEauthorblockN{4\textsuperscript{th} Given Name Surname}
\IEEEauthorblockA{\textit{dept. name of organization (of Aff.)} \\
\textit{name of organization (of Aff.)}\\
City, Country \\
email address or ORCID}
\and
\IEEEauthorblockN{5\textsuperscript{th} Given Name Surname}
\IEEEauthorblockA{\textit{dept. name of organization (of Aff.)} \\
\textit{name of organization (of Aff.)}\\
City, Country \\
email address or ORCID}
\and
\IEEEauthorblockN{6\textsuperscript{th} Given Name Surname}
\IEEEauthorblockA{\textit{dept. name of organization (of Aff.)} \\
\textit{name of organization (of Aff.)}\\
City, Country \\
email address or ORCID}
}

\maketitle

\begin{abstract}
\end{abstract}

\begin{IEEEkeywords}
\end{IEEEkeywords}

\section{Introduction}

\section{Problem Statement}
We consider the following time-invariant control-affine nonlinear system

\bea\label{eq:system}
&&\dot{x}_i = f_i(\bar{x}_i) + g_i(\bar{x}_i)x_{i+1}, \forall i = 1,\ldots,n-1, \nonumber\\
&&\dot{x}_n = f_n(\bar{x}_n) + g_n(\bar{x}_n)u\nonumber\\
&&x(0) = x_0,\ t\geq 0
\eea
where $x = [x_1,\ldots,x_n]^\textrm{T} \in\reals^n, u\in\reals$, are the states, and the control inputs, respectively, and $f_i(x):\reals^n\to\reals, g_i(x):\reals^n\to\reals$ , $\forall i=1,\dots,n$. Moreover, $\bar{x}_i = [x_1,\ldots,x_i]^\textrm{T}$ is a subset of the states. For simplicity, we neglect the dependence of the variables on the time $t$ and \eqref{eq:system} is further denoted as $\dot{x} = f(x) + g(x)u$.

System \eqref{eq:system} is expect to achieve certain goals constrained by temporal logic specifications (which we will describe later) via following one of the trajectories below
\bea\label{eq:system-tracked}
\dot{z}_i = h_i(z), z_i(0) = z_{i,0}, \forall t \ge 0, i \in \mI,
\eea
where $z_i\in\reals^n$ is the $i$-th candidate of the desired trajectories to be tracked, $h_i:\reals^n \to \reals^n$ is a Lipschitz continuous function with $h_i(0) = 0$ and $\mI$ is the set of the trajectories to be tracked.
As we aim to track a certain trajectory, we shall define the tracking error as follows
\bea\label{eq:error}
e := x - z
\eea
where $z\in z_{\mI}$ is one of the trajectory candidate to follow. Then the dynamics of $e$ is
\bea\label{eq:e_dynamics}
\dot{e} = f(e+z) + g(e+z)u - h(z).
\eea
Furthermore, with $\xa = [e^\T, z^\T]^\T$ as the augmented states, the new system can be rewritten in the following  by combining \eqref{eq:system-tracked} and \eqref{eq:e_dynamics} 
\bea\label{eq:augment}
\dot{x}_{aug} = \bar{f}(\xa) + \bar{g}(\xa)u,
\eea
where $\bar{f}(\xa) = \begin{bmatrix}
f(e+z) - h(z) \\
h(z)
\end{bmatrix}$
and $\bar{g}(\xa) = \begin{bmatrix}
g(e+z) \\
0
\end{bmatrix}$.

\subsection{Temporal Logic Syntax and Semantics}

With predicates $p_1$ and $p_2$, a linear temporal logic (LTL) specification has the following syntax, \textcolor{blue}{Zhe: I suggest that we use = instead of $\models$ as $\models$ usually means "satisfies" in the literature, and use $\phi$ to denote formulas, $p$ to denote predicates only.}
\bea\label{eq:TL_syntax}
&&p \models \top | \bigcirc p_1 | \Box p_1 | \lnot p_1 | \Diamond p_1 | p_1 \wedge p_2 | p_1 \vee p_2 |  \nonumber\\
&&\hspace{2cm}p_1 \Rightarrow p_2 | p_1 \mathcal{U} p_2 | p_1 \mathcal{T} p_2,
\eea
where $\top$  is the true Boolean constant, $\bigcirc$ (next), $\Box$ (always), $\Diamond$ (eventually), $\mathcal{U}$ (until), $\mathcal{T}$ (then) are the temporal operators, and $\lnot$(negation/not), $\wedge$ (conjunction/and), and $\vee$ (disjunction/or) are the Boolean connectives. Consider an example formula of the form $p \models \Diamond p_1\  \wedge \ \Diamond p_2 \  \wedge\  \Box p_3$ which denotes that the trajectory of the state will eventually satisfy $p_1$ and $p_2$ while always satisfying $p_3$. For example, one tracking pattern can be like i) A unmanned aerial vehicle (UAV) monitors the area in a special zig-zag; ii) If the battery is low, the UAV will track and try to reach the charging station; iii) If the fire is found, the UAV will try to approach the monitoring station to be within and communication range and thus activate the alert~\cite{vamvoudakis2017event}. 

\begin{Problem}\label{Problem:safety}
Define $\lambda >0$ and $Q$ a convex polytope. For the system given in \eqref{eq:system}, find a control policy such that the closed-loop system has a stable equilibrium point, while the control input satisfies $\|u\| \le \lambda$, and the states $x\in Q$ tracks a sequence of trajectories specified via a LTL constraint given by $p$.\frqed
\end{Problem}

\subsection{Decomposition of LTL}
Given that the TL specification is related to the time evolution, the concrete safety constraint in a given time interval varies. As a result, via the use of FSA, we manually divided Problem \ref{Problem:safety} into a series of sequential sub-problems, that each satisfies a certain safety constraint. Consequently, the original TL safety constraint is eventually satisfied. Consider now the LTL $p \models \Diamond p_2\  \wedge \ (\lnot p_2 \mathcal{U} p_1) \  \wedge\  \Box \lnot p_3$, whose FSA is shown in Figure~\ref{f:automaton}. Given that $p_1$ and $p_2$ are disjoint, the only path from the initial FSA state T0\_init to the final state accept\_S0 is: T0\_init $\to$ T0\_S2 $\to$ accept\_S0, where the state accept\_S0 means the LTL $p$ is satisfied. Thus, there are two TPBVP sub-problems, with different safety constraints. For the first one (T0\_init $\to$ T0\_S2), the boundary conditions are the initial states and $p_1$ while the safety zone is outside of $p_2 \bigcup p_3$. Similarly, for the second sub-problem (T0\_S2 $\to$ accept\_S0), the boundary conditions are $p_1$ and $p_2$ while the safety zone is outside of $p_3$. However, systematically for an co-safe LTL formula with a complex FSA, it is not straightforward to get the path from an initial FSA state to a final FSA state. Hence, we can view the FSA as a directed graph and approximately estimate the edge weights, i.e., distance, and then cast it as a shortest path problem. So far, we have finished decomposing complex LTL specifications into sequential sub-problems, that as a whole will eventually satisfy the original co-safe LTL constraint.

\begin{figure}[!ht]
	\vspace{-0.1cm}
	\hspace{0.5cm}
	\includegraphics[scale=0.6]{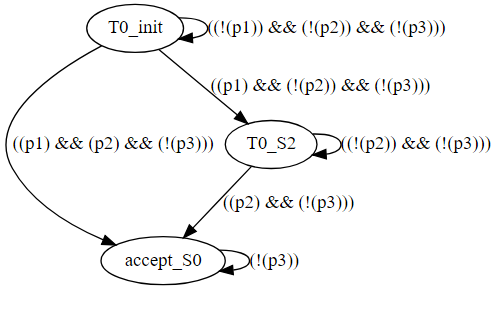}
	\vspace{-0.5cm}
	\caption{Finite state automaton generated by co-safe LTL formula $p \models \Diamond p_2\  \wedge \ (\lnot p_2 \mathcal{U} p_1) \  \wedge\  \Box \lnot p_3$.}
	\label{f:automaton}
\end{figure}

\subsection{Sub-Problem tracking a certain trajectory}
In a sub-problem, a certain trajectory $z_i, i\in \mI$ is tracked. Moreover, the state is subject to some safety constraints as $x\in Q$ and $Q = \{x|c \le Ax + r \le C\}$ with $A = [a_1^\T;a_2^\T;\ldots;a_m^\T]^\T\in\reals^{m\times n}$, $r = [r_1,\ldots,r_m]$, $c = [c_1,\ldots,c_m]$, $C = [C_1,\ldots,C_m] \in\reals^m$ and $q_i = a_ix + r_i$.

\begin{Problem}\label{Problem:safety2}
(sub-problem) For the system given by \eqref{eq:system}, find a control policy such that the system tracks a certain trajectory $i\in\mI$ given an initial condition $x_0$, while the control input satisfies
$\|u\| \le \lambda$, and the states satisfy $x\in Q$.\frqed
\end{Problem}
As a sub-problem of Problem~\ref{Problem:safety}, the Problem~\ref{Problem:safety2} translates a temporal logic constraint $p$ to a more detailed and time-invariant constraint in the current time interval.

To deal with the safety constraint, a barrier function is defined by,
\bea\label{eq:barrier}
b(q,{c}_0,{C}_0) = \log\Big(\frac{C_0}{c_0}\frac{c_0-q}{C_0-q}\Big),\forall p\in(c_0,C_0),
\eea
where ${c}_0 < 0 < {C}_0$. This does not lose any generality since we can always satisfy this via adjusting $r_i$. Moreover, $b(q,{c}_0, {C}_0)$ is invertible in the interval $({c}_0, {C}_0)$ as follows
\bea\label{eq:barrier_inverse}
b^{-1}(y,{c}_0, {C}_0) = c_0C_0\frac{e^{\frac{y}{2}} - e^{-\frac{y}{2}}}{c_0e^{\frac{y}{2}} - C_0e^{-\frac{y}{2}}},\forall y\in \reals,
\eea
with dynamics,
\bea\label{eq:barrier_inverse_derivatives}
\frac{\dt b^{-1}(y,{c}_0, {C}_0)}{\dt y} = \frac{C_0c_0^2 - c_0C_0^2}{c_0^2e^y - 2c_0C_0 + C_0^2e^{-y}}.
\eea

A system transformation of \eqref{eq:system} that accounts for safety can be written as,
\bea\label{eq:transformation}
s_i &=& b(q_i(x),c_i,C_i) \nonumber\\
q_i(x) &=& b^{-1}(s_i,c_i,C_i) \\
q_i(x) &=& a_ix + r_i, \forall i=1,\ldots,m. \nonumber
\eea

Through the use of the chain rule, we have that,
\bea
\frac{\dt q_i(x)}{\dt t} = a_i\dot{x} = \frac{\dt b^{-1}(s_i,c_i,C_i)}{\dt s_i}\frac{\dt s_i}{\dt t},\nonumber
\eea
which yields,
\bea\label{eq:transformation2}
\frac{\dt s_i}{\dt t} = \frac{1}{\frac{\dt b^{-1}(s_i,c_i,C_i)}{\dt s_i}}a_i^\textrm{T}\dot{x}.
\eea

We can thus write in a compact form,
\bea\label{eq:system2}
&&\dot{x} = f(x) + g(x)u\nonumber\\
&&x(0) = x_0,
\eea
with $f(x) = [f_i(\bar{x}_i) + g_i(\bar{x}_i)x_{i+1}, \ldots, f_{n-1}(\bar{x}_{n-1}) + g_{n-1}(\bar{x}_{n-1})x_{n},f_n(\bar{x}_n)]^\T$ and $g(x) = [0,\ldots,0,g_n(\bar{x}_n)]$. Additionally, from equation \eqref{eq:transformation}, we have that,
\bea\label{eq:transformation5}
Ax + r = b^{-1}(s,c,C),
\eea
with $b^{-1}(s,c,C) = [b^{-1}(s_1,c_1,C_1),\ldots,b^{-1}(s_m,c_m,C_m)]^\textrm{T} \in\reals^m$.

It also follows that,
\bea\label{eq:transformation6}
x  = (A^\T A)^{-1}A^\T(b^{-1}(s,c,C) - r).
\eea

\begin{remark}
It is worth noting that to $A^\T A\in\mathcal{S}^{n}$ invertible, it will be necessary for $A$ to be full column rank, i.e., $m \ge n$. That is not restricted since we can always add sufficiently large trivial bounds on the states, such as $-M \le x_1  \le M$ with $M>0$.
\end{remark}

 Now combining \eqref{eq:transformation2}, \eqref{eq:system2}, and \eqref{eq:transformation6}, one has,
\bea\label{eq:transformation7}
\frac{\dt s_i}{\dt t} &=& \frac{1}{\frac{\dt b^{-1}(s_i,c_i,C_i)}{\dt s_i}}a_i\dot{x} \\
&=&\frac{1}{\frac{\dt b^{-1}(s_i,c_i,C_i)}{\dt s_i}}a_i \big(f(x) + g(x)u \big),\nonumber
\eea
with $x = (A^\textrm{T}A)^{-1}A^\textrm{T}(b^{-1}(s,c,C) - r)$. For notational convenience, the system \eqref{eq:transformation7} will be rewritten in a compact form as
\bea\label{eq:transformation8}
\dot{s} = F(s) + G(s)u.
\eea

In Problem~\ref{Problem:safety2}, the terminal condition is required that the tracking error be zero. Correspondingly, in \eqref{eq:transformation8} the terminal condition is $s \to s(e=0)$. Then a new equivalent problem is defined as follows.

\begin{Problem}\label{Problem:safety3}
For the system \eqref{eq:transformation8}, find a policy $u$ such that the system reaches the terminal condition $e\to 0$ from the current condition such that the performance
\[J(s_0) = \int_{0}^{\infty}\big(\Psi(s) + \Theta(u) \big)\dt t\]
is minimized by $u$ subject to the dynamics given by \eqref{eq:transformation8}, $\|u\| \le \lambda > 0$,
where $\Psi(s)$ is positive definite and monotonically increasing with regards to $\|s\|$, and $\Theta(u), $ is a positive definite integrand function. For notational simplicity, we shall define $U(s,u,v):= \Psi(s) + \Theta(u)$.\frqed
\end{Problem}

To satisfy the safety constraint on $u$, i.e.,  $\|u\| \le \lambda$, $\Theta(u)$ has the following form adopted from \cite{abu2005nearly},
\bea\label{eq:Thetau}
\Theta(u) = 2\int_{0}^{u}\lambda \text{tanh}^{-1}\Big(\frac{z}{\lambda}\Big)\gamma_1\dt z 
\eea
where $\text{tanh}^{-1}(\cdot)$ denotes the inverse of the hyperbolic tangent function. One also needs to note that the states $s$ and control $u$ are not coupled in the safety constraints.

\bibliographystyle{IEEEtran}
\bibliography{root}
\end{document}